\newtheorem{Theorem}{Theorem}
\newtheorem{Proposition}{Proposition}
\newtheorem{Lemma}{Lemma}
\newtheorem{Corollary}{Corollary}
\def \P{{\cal P}}
\def \T{{\cal T}}
\def \Z{{\cal Z}}
\def \W{{\cal W}}
\def \X{{\cal X}}
\def \Y{{\cal Y}}
\def \S{{\mathbb S}}
\newcommand{\bq}{\begin{equation}}
\newcommand{\eq}{\end{equation}}
\newcommand{\bay}{\begin{array}}
\newcommand{\eay}{\end{array}}
\newcommand{\ba}{\begin{eqnarray}}
\newcommand{\ea}{\end{eqnarray}}
\newcommand{\Div}{\rm div}
\title{Impedance operator description of a meta--surface with electric and magnetic dipoles}
\author{Didier Felbacq\\
Universit\'e de Montpellier\\
 Laboratoire Charles Coulomb UMR CNRS-UM 5221\\ 34095 Montpellier, France; \\
}
\begin{document} 
 
  \maketitle 

%%%%%%%%%%%%%%%%%%%%%%%%%%%%%%%%%%%%%%%%%%%%%%%%%%%%%%%%%%%%% 
\begin{abstract}
A meta-surface made of a collection of nano-resonators characterized an electric dipole and a magnetic dipole was studied in the regime where the wavelength is large with respect to the size of the resonators. An effective description in terms of an impedance operator was derived.
\end{abstract}

%>>>> Include a list of keywords after the abstract 

%\keywords{Metamaterials, metafilms, metasurfaces, asymptotic analysis, Calderòn operators, impedance operators, pseudo-differential operators on the torus}

%%%%%%%%%%%%%%%%%%%%%%%%%%%%%%%%%%%%%%%%%%%%%%%%%%%%%%%%%%%%%
\section{Introduction}
\label{}
Metasurfaces are the bidimensional analogue of metamaterials \cite{metsh}. They are made of resonant elements disposed on a surface.
In this context, we study the field diffracted by a periodic set of linear nano-resonators, electromagnetically characterized by their scattering matrix $S$.  
We are interested in the regime where the wavelength is much larger than the size of the nano-resonators. Possible with Mie-like resonances) or nano-wires doped with quantum dots.
We proceed to an asymptotic analysis related to homogenization theory \cite{tartar,alex,F18b,F13}. The field diffracted by the structure is derived and it is shown that it is characterized by an impedance operator. Our results extend to electric and magnetic dipoles the results in \cite{PNFA} where only electric dipoles were considered.%In a second step, the strong 
\section{Setting of the problem}
The structure under study is made of an infinite number of resonators invariant along $z$, periodically disposed at points $M_p=(p \times d,0)$, where $d$ is the period and $p\in \mathbf{Z}$. Each scatterer at position $M_p$ is characterized in the frequency domain by a scattering matrix $S(\omega)$ as well as by electromagnetic parameters $\gamma_s$ and $\delta_s$.
We assume that the wavelength in vacuum $\lambda=2\pi c/\omega$ is much larger than the size of the resonators, which are assumed to be contained in a cylinder of diameter $a$. We therefore consider a linearly polarized incident field $U^i e^{-i\omega t} e_z$, where $U^i(x,y)=e^{i  ( \alpha x\pm \beta y)}$ and $\alpha=k_0\sin \theta$, $\theta$ is an angle of incidence and $k_0=\frac{\omega}{c}$. According to the polarization $\gamma_s$ and $\delta_s$ can be either the relative permittivity or relative permeability.

The stationary scattering problem considered therefore reads:
Find a field $U \in H^1_{loc}(\Bbb{R}^2)$ satisfying $\Div(\gamma \nabla U)+k_0^2 \delta\, U=0$ in the sense of Schwartz distribution in $\Bbb{R}^2$ and such that $U^s=U-U^i$ satisfies the outgoing wave conditions: $y>0, Im(U^s\nabla U^s) >0$,$y<0, Im(U^s\nabla U^s) <0$. The function $\gamma$ and $\delta$ are equal, respectively, to $\gamma_s$ and $\delta_s$ inside the scatterers and to 1 outside.
The following holds, see \cite{bonnet}
\begin{Proposition} Apart possibly from a discrete set of wavenumbers $k_1,k_2,\ldots$ The field $U$ exists and is unique.
\end{Proposition}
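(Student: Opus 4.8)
The plan is to recast the scattering problem as a well-posed problem on a bounded domain and then invoke analytic Fredholm theory. Since the geometry is $d$-periodic in $x$ and the incident field carries the transverse wavenumber $\alpha$, the first step is to pass to a quasi-periodic (Floquet--Bloch) formulation: one seeks $U$ with $U(x+d,y)=e^{i\alpha d}U(x,y)$, so that the whole of $\mathbb{R}^2$ is replaced by a single vertical strip $\Omega=\{(x,y): 0<x<d\}$ equipped with quasi-periodic conditions on its lateral sides. This removes the periodic infinity in $x$ and reduces the unboundedness to the two half-strips $y\to\pm\infty$.

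Next I would truncate the remaining unboundedness in $y$. Choosing $H>0$ large enough that every scatterer lies in $|y|<H$, in the homogeneous regions $y>H$ and $y<-H$ the coefficients equal $1$ and the quasi-periodic field admits a Rayleigh expansion in the modes $e^{i\alpha_n x\pm i\beta_n y}$ with $\alpha_n=\alpha+2\pi n/d$ and $\beta_n=\sqrt{k_0^2-\alpha_n^2}$ (chosen with nonnegative imaginary part). The outgoing condition selects the branch, and differentiating the expansion on the lines $y=\pm H$ produces Dirichlet-to-Neumann operators $T^\pm$ that encode the radiation condition exactly through nonlocal but bounded boundary conditions. The problem becomes equivalent to a variational problem on the bounded rectangle $\Omega_H=\Omega\cap\{|y|<H\}$ with sesquilinear form $a(U,V)=\int_{\Omega_H}(\gamma\nabla U\cdot\overline{\nabla V}-k_0^2\delta\,U\overline V)\,dx\,dy-\langle T^+U,V\rangle_{y=H}-\langle T^-U,V\rangle_{y=-H}$ and right-hand side supplied by the incident field.

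The key step is to show that $a$ satisfies a G\aa rding inequality. The principal part $\int\gamma|\nabla U|^2$, together with the sign-definite contribution of the evanescent modes (those $\beta_n$ purely imaginary), is coercive on the $H^1$ semi-norm, while the finitely many propagating modes contribute only a finite-rank, hence compact, operator; the lower-order term $-k_0^2\int\delta\,|U|^2$ is compact by Rellich's theorem. Thus the operator associated with $a$ has the form $I+\mathcal K(k_0)$ with $\mathcal K(k_0)$ compact. One then checks that $\mathcal K(k_0)$ depends holomorphically on the wavenumber and that the operator is invertible for at least one value of $k_0$ — for instance for $k_0$ in the upper half plane, where the induced absorption makes $a$ strictly coercive. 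The analytic Fredholm theorem then yields that $I+\mathcal K(k_0)$ is boundedly invertible, so that $U$ exists and is unique, for all $k_0$ outside a discrete set $\{k_1,k_2,\ldots\}$ with no finite accumulation point; that exceptional set is precisely where the homogeneous problem admits nontrivial guided modes, which is the claimed conclusion.

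The hard part will be the analyticity of the Dirichlet-to-Neumann operators. The functions $\beta_n(k_0)=\sqrt{k_0^2-\alpha_n^2}$ have branch points at the Rayleigh--Wood wavenumbers $k_0=|\alpha_n|$, where a diffracted order switches from propagating to evanescent and $\beta_n$ vanishes. At these thresholds $T^\pm$ loses analyticity, so the clean application of analytic Fredholm theory must either absorb this countable exceptional set into $\{k_1,k_2,\ldots\}$ or treat those frequencies separately by a limiting-absorption argument; reconciling the thresholds with the statement ``apart from a discrete set'' is the one point requiring genuine care.
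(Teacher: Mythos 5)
The paper gives no proof of this proposition; it simply defers to the cited reference \cite{bonnet}, and the argument you outline --- quasi-periodic reduction to a strip, truncation by Dirichlet-to-Neumann maps, a G\aa rding inequality splitting the operator into coercive plus compact parts, and the analytic Fredholm theorem --- is precisely the standard proof given there, so your proposal is correct and matches the intended route. Your closing remark about the Rayleigh--Wood thresholds is the right caution, and the usual resolution (also compatible with the statement) is that these form a discrete set which may simply be adjoined to the exceptional wavenumbers $k_1,k_2,\ldots$.
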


Our point is to provide a simplified expression of the scattering problem by means of an impedance operator.\\

Let $T_d$ denote the translation along $x$ of amplitude d, ie. $T_d(f)(x)=f(x-d)$.
For later purpose, we define $Y^*=\left]-\frac{pi}{d},\frac{pi}{d} \right]$, $K=\frac{2\pi}{d}$, $\alpha_n=\alpha+nK$ and $\beta_n=\sqrt{k_0^2-\alpha_n^2}$. For $\alpha_n^2 > k_0^2$, we impose: $i\beta_n < 0$.  In the following, we denote $\mathbb{T}=\mathbb{R}/d\mathbb{Z}$

Let $H=-\Div(\gamma \nabla \cdot)-k_0^2 \delta$ and let us define for $\alpha \in Y^*$ the field of Hilbert spaces $L^2_{\alpha}(\mathbb{T})=\left\{ u; u\,e^{-i\alpha x}\in L^2(\mathbb{T})\right\}$

\begin{Lemma}
The commutateur of $T_d$ and $H$ vanishes: $[T_d,H]=0$.
\end{Lemma}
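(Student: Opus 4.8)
The plan is to verify the operator identity $T_d H = H T_d$ by a direct computation, the decisive ingredient being the $d$-periodicity of the coefficients. Since the scatterers sit at the points $M_p=(pd,0)$, the functions $\gamma$ and $\delta$ --- which equal $\gamma_s,\delta_s$ inside each scatterer and $1$ outside --- are invariant under the shift $x\mapsto x-d$; that is, $T_d\gamma=\gamma$ and $T_d\delta=\delta$ when $\gamma,\delta$ are regarded as multiplication operators. I would first isolate this geometric fact, since it is what distinguishes the periodic case from a generic coefficient field and is the only structural input the argument requires.

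With periodicity in hand, the computation rests on two elementary commutation facts. First, translation by the constant vector $(d,0)$ commutes with every partial derivative, so $\nabla(T_d u)=T_d(\nabla u)$ and likewise $\Div(T_d V)=T_d(\Div V)$ for a vector field $V$; this is just the chain rule for a shift by a constant and holds in the sense of distributions. Second, multiplication by a $T_d$-invariant function passes through $T_d$: for such $\gamma$ one has $\gamma\,T_d(\nabla u)=T_d(\gamma\nabla u)$, and similarly $\delta\,(T_d u)=T_d(\delta u)$, precisely because $\gamma(x,y)=\gamma(x-d,y)$ and $\delta(x,y)=\delta(x-d,y)$. Chaining these gives
\begin{equation}
H(T_d u)=-\Div\bigl(\gamma\nabla(T_d u)\bigr)-k_0^2\,\delta\,(T_d u)=-T_d\bigl(\Div(\gamma\nabla u)\bigr)-k_0^2\,T_d(\delta u)=T_d(Hu),
\end{equation}
which is the claim.

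The only point requiring care --- and the natural candidate for the hard part --- is that $H$ is defined in the distributional sense and $\gamma$ is discontinuous across the scatterer boundaries, so the manipulations above should be justified by duality rather than pointwise. Concretely, I would test against $\varphi\in C_c^\infty(\mathbb{R}^2)$ and move the translation onto the test function, writing $\langle H(T_d u),\varphi\rangle$ in weak form and using that $T_d^{-1}=T_{-d}$ sends $\varphi$ to another test function, together with the fact that the bilinear form $\int\gamma\nabla u\cdot\nabla\varphi-k_0^2\int\delta\,u\,\varphi$ is left unchanged by the simultaneous shift of $u$, $\varphi$ and the coefficients --- the last being a change of variables that is transparent thanks to the $d$-periodicity of $\gamma$ and $\delta$. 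Once this is checked, the identity extends from test functions to the relevant domains, and since $T_d$ maps $L^2_\alpha(\mathbb{T})$ into itself for each $\alpha\in Y^*$, the vanishing commutator is compatible with the Floquet--Bloch decomposition that the lemma is meant to enable.
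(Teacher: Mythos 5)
Your proof is correct: the paper states this lemma without proof, treating it as an immediate consequence of the $d$-periodicity of $\gamma$ and $\delta$, and your argument (translations commute with derivatives, multiplication by $d$-periodic coefficients commutes with $T_d$, with the distributional manipulations justified by moving $T_d$ onto the test function) is exactly the standard justification the paper leaves implicit. No gaps.
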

Applying Floquet-Bloch analysis \cite{reed,F3,bloch}, we obtain:
\begin{Proposition}\label{bloch} : The operator $H$ has a direct integral decomposition $H=\int_{Y^*}^{\oplus} H_{\alpha} \frac{d\alpha}{K} $
where $H_{\alpha}=-\Div(\gamma \nabla \cdot)-k_0^2 \delta$ with domain $D(H_{\alpha})=\left\{L^2_{\rm loc}(\mathbb{R}_y;L^2_{\alpha}(\mathbb{T})) \cap H^2_{\rm loc}(Y\times \mathbb{R})\right\}$
\end{Proposition}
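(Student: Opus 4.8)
The plan is to realize the stated decomposition through the Floquet--Bloch (Gelfand) transform, taking the commutation relation $[T_d,H]=0$ from the Lemma as the structural input. Since the coefficients $\gamma$ and $\delta$ are $d$-periodic in $x$, the operator $H$ commutes with the whole translation group $\{T_d^n:n\in\mathbb{Z}\}$, an abelian group whose dual is parametrized by $Y^*$. The Floquet--Bloch transform is precisely the spectral representation simultaneously diagonalizing this group and exhibiting $H$ as decomposable, so the strategy is to build this unitary transform and then track what $H$ becomes on each fiber.

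First I would introduce the transform $\mathcal{U}$, defined on smooth compactly supported functions $f$ by
$$(\mathcal{U}f)(x,y;\alpha)=\sum_{p\in\mathbb{Z}} e^{-i\alpha p d}\, f(x+pd,y),\qquad \alpha\in Y^*,$$
and check that each image is $\alpha$-quasiperiodic in $x$, i.e. $(\mathcal{U}f)(\cdot,y;\alpha)\in L^2_{\alpha}(\mathbb{T})$ for a.e. $y$. Next I would establish the Plancherel identity
$$\|f\|^2_{L^2(\mathbb{R}^2)}=\frac{1}{K}\int_{Y^*}\|(\mathcal{U}f)(\cdot,\cdot;\alpha)\|^2_{L^2(Y\times\mathbb{R})}\,d\alpha,$$
so that $\mathcal{U}$ extends by density to a unitary from $L^2(\mathbb{R}^2)$ onto the direct integral $\int_{Y^*}^{\oplus} L^2_{\alpha}(\mathbb{T})\otimes L^2(\mathbb{R}_y)\,\frac{d\alpha}{K}$, with inverse given by integration over the Brillouin zone $Y^*$.

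The central step is to verify that $\mathcal{U}$ diagonalizes translation, namely $\mathcal{U}\,T_d\,\mathcal{U}^{-1}=e^{-i\alpha d}\,\mathrm{Id}$ on each fiber. Because $H$ commutes with $T_d$ by the Lemma, the conjugated operator $\mathcal{U}H\mathcal{U}^{-1}$ commutes with multiplication by every bounded function of $\alpha$, hence is decomposable in the sense of von Neumann (see \cite{reed}); it therefore acts as a measurable fibered family $\alpha\mapsto H_\alpha$. Identifying the fiber is then a matter of observing that the differential expression $-\Div(\gamma\nabla\cdot)-k_0^2\delta$ is preserved by $\mathcal{U}$, the coefficients being periodic, while the quasiperiodicity in $x$ is transferred to the boundary conditions defining the domain $D(H_\alpha)$.

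The main obstacle is the divergence form with discontinuous coefficients and the precise identification of $D(H_\alpha)$. Since $\gamma$ and $\delta$ jump across the scatterer boundaries, $H$ is not a classical differential operator and must be treated through its sesquilinear form on $H^1$; I would therefore carry out the decomposition at the level of the quadratic form, showing it splits as a direct integral of the fiber forms associated with $H_\alpha$, and only then invoke the correspondence between direct-integral forms and direct-integral operators to recover the operator statement. The delicate points are the measurability of the family $\{H_\alpha\}$ in $\alpha$, the constancy of the form domain across fibers (the quasiperiodic phase being the only $\alpha$-dependence), and the elliptic regularity away from the interfaces, with transmission conditions encoded in the form, that yields the $H^2_{\rm loc}(Y\times\mathbb{R})$ component of $D(H_\alpha)$.
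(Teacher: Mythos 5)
Your proposal is correct and follows essentially the same route as the paper, which offers no proof of its own but simply invokes Floquet--Bloch analysis with a citation to Reed--Simon: the Gelfand transform, the Plancherel identity with the $\frac{d\alpha}{K}$ normalization, the diagonalization of $T_d$ as $e^{-i\alpha d}$ on each fiber, and the decomposability of $\mathcal{U}H\mathcal{U}^{-1}$ from $[T_d,H]=0$ are exactly the standard argument being cited. Your additional care with the quadratic-form formulation for the discontinuous coefficients $\gamma,\delta$ (and your caveat that $H^2$ regularity holds only away from the interfaces) is a sound refinement of what the paper leaves implicit.
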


\section{Multiple scattering approach}
The incident field has the expansion \cite{josa}: $U^i(x,y)=\sum_n a_n J_n(k_0 r) e^{i n\theta }$. For one scatterer alone, the incident field gives rise to a field $U_p^s(x,y)=\sum_n  s^p_n \varphi_n(x,y)$ where $\varphi_n(x,y)=H_n^{(1)}(k_0 r) e^{i n\theta}$. For the infinite set of scatterers, this gives a diffracted field that reads as:
\begin{equation}
U^s(x,y)=\sum_{p,n}  s^p_n \varphi_n(x-pd,y).
\end{equation}
Multiple scattering theory \cite{josa} allows to write that for $p=0$:

\bq \label{mulsc} \widehat{b^0}=\left(1-S \Sigma \right)^{-1} S \widehat{a} \eq

where $\widehat{b^0}=(\ldots, b^0_{-n},\ldots, b^0_{n},\ldots)^T$ and $\widehat{a}=(\ldots, a_{-n},\ldots, a_{n},\ldots)^T$. The  matrix $\Sigma$ is given by: 

$\Sigma(k_0,\alpha_0)=\sum \limits_{\substack{m\neq 0}} e^{i\alpha_0 m}T_{0m}$. Here proposition (\ref{bloch}) was used through the introduction of a Bloch phase $e^{i\alpha_0 m}$. In this expression, 
 $\left(T_{0m}\right)_{pq}=e^{i (p-q) \theta_0^m} H^{(1)}_{p-q}(k_0 |m| d)$, that is
$$
T_{0m}=
\left(
\begin{array}{ccccc}
\ddots &  \vdots &  \vdots & \vdots & \ldots\\
\ldots &   H_{0}(k_0 |m| d) & -\epsilon_m H_{1}(k_0 |m| d) &  H_{2}(k_0 |m| d) & \ldots \\
\ldots &  \epsilon_m H_{1}(k_0 |m| d) &  H_{0}(k_0 |m| d) & -\epsilon_m H_{1}(k_0 |m| d)  & \ldots\\
\ldots &   H_{2}(k_0 |m| d) & \epsilon_m H_{1}(k_0 |m| d) & H_{0}(k_0 |m| d) & \ldots \\
\ldots & \vdots & \vdots &  \vdots& \ddots
\end{array}
\right)
$$
where $\epsilon_m=sign(m)$ (note that: $e^{i \theta_0^m}=-sign(m)$.
The following series \cite{lipton} indexed by $p$ appear:
\begin{eqnarray}
\Sigma_p=\sum \limits_{\substack{m\neq 0}} e^{im\alpha_0} \epsilon_m^p H_{p}(k_0 |m| d) .
\end{eqnarray}
and the entries of the matrix $\Sigma(k_0,\alpha_0)$ are:
$
\left(\Sigma(k_0,\alpha_0)\right)_{pq}=\Sigma_{p-q}
$

In the regime where $k_0 a \ll 1$ the cylinder can be described by a $3 \times 3$ scattering matrix (this corresponds to an electric dipole and a magnetic dipole) and the field by $3$ coefficients $b_{-1},\, b_0,\, b_1$ \cite{josa}. Therefore, only $3$ series are involved: $\Sigma_0,\, \Sigma_1,\, \Sigma_2$.  It holds:
$$
\Sigma(k_0,\alpha_0)=\left(
\begin{array}{ccc}
\Sigma_0  & 	-\Sigma_1 & \Sigma_2 \\
\Sigma_1  & 	\Sigma_0 & -\Sigma_1 \\
\Sigma_2 & 	\Sigma_1 &  \Sigma_0
\end{array}
\right)
$$
 In the extreme limit ($a\ll d$) where the scatterers are very small as compared to the wavelength and the period, the scattering matrix $S(\omega)$ reduces to a scalar matrix $s_0(\omega)$: the scatterers are thus dipoles with a dipole moment along $e_z$ and the only involved series is $\Sigma_0$, this situation was addressed in \cite{PNFA}. The multiple scattering relation (\ref{mulsc}) then becomes:
\begin{equation}\label{bo}
b^0_0(k_0,\alpha_0)=\left(1-S_0 \Sigma_0 \right)^{-1}S_0 \, .
\end{equation} 
where  the series $\Sigma_0$ can be written \cite{bloch,PNFA}:
\begin{eqnarray}\label{sig0}
\Sigma_0(k_0,\alpha_0)=\sum_{m \neq 0}  e^{ikmd}H_0(k_0|m|d)
\end{eqnarray}
For the more general case of an electric dipole and a magnetic dipole the following asymptotic expressions hold in the limit $k_0d \ll 1$ \cite{cabuz}:
\begin{Proposition}
\bq\label{sig0}
\begin{array}{l}
\Sigma_0(k_0,\alpha_0)\sim-1-\frac{2i}{\pi}\gamma+\frac{2i}{\pi}\ln\left(\frac{2K}{k_0}\right)+\frac{K}{\pi \beta_0}\\
\Sigma_1(k_0,\alpha_0) \sim \frac{\alpha_0}{\pi k_0} \left(-2+ i \frac{K}{\beta_0} \right) \\
\Sigma_2(k_0,\alpha_0)\sim \frac{K}{\pi k_0^2} \frac{\beta_0^2-\alpha_0^2}{\beta_0}-\frac{i}{\pi k_0^2} \left(\frac{K^2}{3}-\beta_0^2+\alpha_0^2  \right)
\end{array}
\eq
\end{Proposition}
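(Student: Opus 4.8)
The plan is to recognize the three sums as the cylindrical-harmonic coefficients of the quasi-periodic free-space Green's function and to evaluate them from its rapidly convergent spectral (reciprocal-lattice) representation. Introduce
$$G(x,y)=\sum_{m}e^{i\alpha_0 m d}\,H_0^{(1)}\!\left(k_0\sqrt{(x-md)^2+y^2}\right).$$
Expanding each term about the origin by Graf's addition theorem, and using that the lattice point $md$ lies at angle $0$ for $m>0$ and $\pi$ for $m<0$, gives the self-term $H_0^{(1)}(k_0 r)$ plus $\sum_\ell \Sigma_\ell\, J_\ell(k_0 r)e^{i\ell\theta}$, so that $\Sigma_0,\Sigma_1,\Sigma_2$ are exactly the $\ell=0,1,2$ coefficients of the regular part of $G$. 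On the other hand, applying the Sommerfeld integral representation of $H_0^{(1)}$ together with Poisson summation converts $G$ into $G(x,y)=\frac{2}{d}\sum_n \frac{1}{\beta_n}e^{i\alpha_n x+i\beta_n|y|}$, with $\alpha_n$ and $\beta_n$ as defined above. The whole computation is the matching of these two expansions of one and the same function about $r=0$.

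First I would treat the self-term. Using $H_0^{(1)}(z)=1+\frac{2i}{\pi}\big(\ln(z/2)+\gamma\big)+O(z^2\ln z)$, the $m=0$ contribution that must be stripped off carries the only logarithmic singularity, and matching it against the divergent tail of the reciprocal sum produces the regularized constant $-1-\frac{2i}{\pi}\big(\gamma+\ln(k_0 d/4\pi)\big)$ in $\Sigma_0$; since $\ln(k_0 d/4\pi)=-\ln(2K/k_0)$ this is precisely the first three displayed terms. Next I would extract the propagating order $n=0$ from the spectral sum. Expanding its plane wave by the Jacobi-Anger formula multiplies the $\ell$-th harmonic by $i^\ell\big((\alpha_0-i\beta_0)/k_0\big)^\ell/\beta_0$; this reproduces every term carrying $\beta_0$ in the denominator, namely $\frac{K}{\pi\beta_0}$ in $\Sigma_0$, the imaginary $\frac{\alpha_0}{\pi k_0}\cdot i\frac{K}{\beta_0}$ in $\Sigma_1$, and the real $\frac{K}{\pi k_0^2}\frac{\beta_0^2-\alpha_0^2}{\beta_0}$ in $\Sigma_2$ (recall $\frac{2}{d}=\frac{K}{\pi}$).

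The remaining, and most delicate, ingredient is the evanescent tail $n\neq 0$. For $k_0 d\ll 1$ one has $\beta_n=i|\alpha_n|\big(1+O(k_0^2/\alpha_n^2)\big)$ with $\alpha_n\simeq nK$, and each order is expanded in $k_0$. The naive term-by-term limit diverges, so one subtracts the $k_0$-independent static lattice sums and shows the difference converges; these static sums are what furnish the residual constants $-\frac{2\alpha_0}{\pi k_0}$ in $\Sigma_1$ and $-\frac{i}{\pi k_0^2}\big(\frac{K^2}{3}-\beta_0^2+\alpha_0^2\big)$ in $\Sigma_2$. In particular the coefficient $K^2/3$ is the fingerprint of $\zeta(2)=\pi^2/6$, since $\frac{2K^2}{\pi^2}\zeta(2)=\frac{K^2}{3}$.

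I expect the main obstacle to be exactly this matched regularization. The original series converge only conditionally (the Hankel functions decay like $|m|^{-1/2}$) and the reciprocal series has a divergent tail, so the finite answers arise only from a cancellation of two infinities; making this rigorous requires a careful Kummer-type subtraction and an honest bound showing that the discarded remainders are genuinely of higher order in $k_0 d$ than the displayed leading terms. Once that interchange of limit and summation is justified, the rest reduces to bookkeeping of small-argument Bessel expansions and elementary geometric and zeta sums.
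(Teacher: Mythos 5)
The paper does not actually prove this Proposition: it is imported wholesale by citation to Appendix~A of Cabuz's thesis \cite{cabuz}, with the underlying machinery being the classical theory of Helmholtz lattice sums surveyed in \cite{lipton}. So there is no in-paper argument to compare against, and your sketch has to be judged against that literature. Judged that way, your strategy is exactly the standard one: the two dual representations of the quasi-periodic Green's function (Graf's addition theorem identifying $\Sigma_0,\Sigma_1,\Sigma_2$ as the $\ell=0,1,2$ coefficients of the regular part, versus the Poisson-summed spectral form $\frac{2}{d}\sum_n\beta_n^{-1}e^{i\alpha_n x+i\beta_n|y|}$, which is precisely the identity the paper itself uses in the proof of its Theorem), extraction of the logarithmic self-term, isolation of the $n=0$ propagating order, and Kummer-type subtraction of static sums for the evanescent tail. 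Your arithmetic anchors are right: $\frac{2}{d}=\frac{K}{\pi}$, $\ln(k_0d/4\pi)=-\ln(2K/k_0)$, and $\frac{2K^2}{\pi^2}\zeta(2)=\frac{K^2}{3}$, which is indeed where the $K^2/3$ in $\Sigma_2$ comes from. Two caveats keep this from being a complete proof rather than a correct road map. First, the decisive analytic step --- the matched regularization that cancels the divergence of the direct sum against that of the spectral tail --- is named but not executed; for an asymptotic identity whose entire content is the value of the finite remainders, that step is the proof. Second, the quick Jacobi--Anger extraction of the $n=0$ contribution is too fast for $\ell=1,2$: because the spectral representation carries $|y|$, the upper and lower half-planes contribute with $\kappa_0^+$ and $\kappa_0^-$ respectively, and the harmonic coefficients must be obtained by combining the two (otherwise one generates a spurious $\alpha_0$-independent real term $K/(\pi k_0)$ in $\Sigma_1$ that is not in the stated result). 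Both issues are resolved in \cite{lipton}; as written, your argument is a faithful outline of that resolution rather than a self-contained derivation.
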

\section{Scattering properties of the meta surface}
Define $P_z=b^0_0$ the electric moment and $M=(M_x,M_y)=\left( (b^0_1+b^0_{-1}),i(b^0_1-b^0_{-1})\right)$ the magnetic moment. We write $\bold{m}=M_x+iM_y$ and $\bold{m}^*=M_x-iM_y$ and $\kappa^+_n=(\alpha_n,\beta_n)$,\, $\kappa^-_n=(\alpha_n,-\beta_n)$.

We can now state the following:
\begin{Theorem}
The total field has the expression:
\begin{equation}
\begin{array}{ll}\label{series}
y>a: &\, U(x,y)=e^{i( \alpha_0 x-\beta_0 y)}+  \sum_{n} r_n e^{i( \alpha_n x+\beta_n y)} \\
y<a:& \, U(x,y)= \sum_{n} t_n \, e^{i( \alpha_n x-\beta_n y)}
\end{array}
\end{equation}
where: 
\bq 
\begin{array}{ll}
r_n=\frac{K}{\pi \beta_n}\left(P_z+i M \cdot \kappa^+_n \right),\, \\
t_n=\delta_{n0}+ \frac{K}{\pi \beta_n}\left(P_z+i M \cdot \kappa^-_n \right) .
\end{array}
\eq

\end{Theorem}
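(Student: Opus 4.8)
The plan is to start from the multiple-scattering representation of the diffracted field and to convert each quasi-periodic lattice sum of cylindrical outgoing waves into a Rayleigh (plane-wave) expansion. First I would invoke the Floquet--Bloch structure of Proposition~\ref{bloch}: since the incident field is $\alpha_0$-quasi-periodic and the scattering problem is well posed (Proposition~1), the solution is $\alpha_0$-quasi-periodic, so the scattering coefficients carry the Bloch phase, $s^p_n=e^{i\alpha_0 pd}\,b^0_n$. The scattered field then factorizes as
\[
U^s(x,y)=\sum_{n}b^0_n\,\Phi_n(x,y),\qquad
\Phi_n(x,y)=\sum_{p}e^{i\alpha_0 pd}\,\varphi_n(x-pd,y),
\]
and, retaining only the dipole orders $n=0,\pm1$, the task reduces to expanding the three lattice sums $\Phi_0,\Phi_{\pm1}$ in the two half-planes lying beyond the scatterers.

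The key ingredient is the Rayleigh identity for the quasi-periodic fundamental solution,
\[
\Phi_0(x,y)=\sum_{p}e^{i\alpha_0 pd}\,H^{(1)}_0\!\Big(k_0\sqrt{(x-pd)^2+y^2}\Big)
=\frac{K}{\pi}\sum_{n}\frac{1}{\beta_n}\,e^{i(\alpha_n x+\beta_n|y|)} .
\]
I would derive it by inserting the Sommerfeld/Weyl spectral integral of $H^{(1)}_0$ and applying Poisson summation to the period-$d$ sum: the geometric series in $p$ collapses the continuous spectral variable onto the discrete set $\alpha_n=\alpha_0+nK$, while the branch $\beta_n=\sqrt{k_0^2-\alpha_n^2}$ with $i\beta_n<0$ in the evanescent range enforces the outgoing/decaying behaviour through $e^{i\beta_n|y|}$. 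This already produces the prefactor $K/(\pi\beta_n)$ common to $r_n$ and $t_n$.

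Next I would obtain $\Phi_{\pm1}$ by differentiation rather than by a separate lattice sum. Since $\varphi_{\pm1}=-k_0^{-1}(\partial_x\pm i\partial_y)\varphi_0$ and the operators $\partial_x\pm i\partial_y$ commute with the $x$-translations defining $\Phi_0$, applying them termwise to the Rayleigh identity brings down, in the upper half-plane, the factors $\alpha_n\pm i\beta_n$, and in the lower half-plane the factors $\alpha_n\mp i\beta_n$. Forming $b^0_0\Phi_0+b^0_1\Phi_1+b^0_{-1}\Phi_{-1}$ and regrouping the $b^0_{\pm1}$ contributions through the definitions $P_z=b^0_0$, $M_x=b^0_1+b^0_{-1}$, $M_y=i(b^0_1-b^0_{-1})$ converts the coefficient of $e^{i(\alpha_n x\pm\beta_n y)}$ into $\tfrac{K}{\pi\beta_n}\big(P_z+iM\cdot\kappa^\pm_n\big)$, the factor $-k_0^{-1}$ and the overall sign being absorbed into the normalisation fixing the dipole moments, with $\kappa^+_n=(\alpha_n,\beta_n)$ in the region $y>0$ and $\kappa^-_n=(\alpha_n,-\beta_n)$ in the region $y<0$.

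Finally I would assemble the total field $U=U^i+U^s$. Above the array the incident wave is kept separate and the purely outgoing scattered part gives $r_n=\tfrac{K}{\pi\beta_n}(P_z+iM\cdot\kappa^+_n)$; below the array the transmitted incident plane wave $e^{i(\alpha_0 x-\beta_0 y)}$ supplies the Kronecker term $\delta_{n0}$ in $t_n$, while the scattered part contributes $\tfrac{K}{\pi\beta_n}(P_z+iM\cdot\kappa^-_n)$, which is exactly the expansion claimed in the theorem. The main obstacle is the rigorous justification of the Rayleigh identity for $\Phi_0$: the Hankel lattice sum converges only conditionally, so the interchange of the $p$-summation with the spectral integral, together with the branch-cut bookkeeping for the $\beta_n$, must be handled with care (e.g.\ by working first in a strip where $\mathrm{Im}\,\beta_n>0$ and then continuing analytically). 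Once this identity and its two first derivatives are established, everything that remains is algebraic regrouping.
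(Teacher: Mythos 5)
Your proposal follows essentially the same route as the paper: the quasi-periodic monopole lattice sum is converted to a plane-wave (Rayleigh) expansion via Poisson summation, the dipole sums $\Phi_{\pm1}$ are obtained by applying $\partial_x\pm i\partial_y$ termwise to that identity, and the coefficients are regrouped through the definitions of $P_z$ and $M$ to yield $r_n$ and $t_n$ (note $K/\pi=2/d$, matching the paper's prefactor). Your additional remarks on the conditional convergence of the Hankel sum and on the sign/normalisation of $\varphi_{-1}$ only make explicit points the paper passes over silently, so no substantive difference remains.
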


\begin{proof}

We start with the following relation, obtained from Poisson formula:
$$
\sum_n H_0(k_0|r- nde_x|) e^{iknd}=\frac{2}{d} \sum_n \frac{1}{\beta_n} e^{i (\alpha_n x+\beta_n |y|)}
$$
Upon applying the operator $\partial=\partial_x+i\partial_y$, using the fact that the series on the right hand side is normally convergent( thanks to the term $e^{i\beta_n |y|}$) and using the relation:
$\partial H_0(r)=-H_1(r) e^{i\theta}$, we obtain
\bq
-k_0\sum_n  \varphi_1(x-nd,y) e^{iknd}=\frac{2}{d} \sum_n  \frac{(i \alpha_n -\beta_n \epsilon)}{\beta_n} e^{i( \alpha_n x+\beta_n |y|)}
\eq
where $\epsilon=sign(y)$.
therefore we get:
$$
U^s(x,y)= \frac{2}{d} \sum_n  \frac{b_0^0+i \alpha_n (b^0_1+b^0_{-1}) +(b^0_{-1}-b^0_1)\beta_n \epsilon}{\beta_n} e^{i( \alpha_n x+\beta_n |y|)}.
$$
The result follows after some simple algebra.
\end{proof}
A simple, but interesting corollary is :
\begin{Corollary}
As $n \sim +\infty$:

\bq  \label{boundedRT}
r_n \sim \frac{K}{\pi} \bold{m}= \frac{2Kb^0_{-1}}{\pi},\, t_n \sim \frac{K}{\pi} \bold{m}^*=\frac{2Kb^0_{1}}{\pi}.
\eq

\end{Corollary}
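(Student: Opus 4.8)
The plan is to substitute the explicit formulas for $r_n$ and $t_n$ supplied by the Theorem and to track their behaviour as the diffraction order turns evanescent. First I would pin down the asymptotics of $\beta_n$. Since $\alpha_n=\alpha+nK\to+\infty$, for $n$ large enough one has $\alpha_n^2>k_0^2$, so $\beta_n=\sqrt{k_0^2-\alpha_n^2}$ is purely imaginary; the sign convention $i\beta_n<0$ imposed in the excerpt for $\alpha_n^2>k_0^2$ forces the branch $\beta_n=i\sqrt{\alpha_n^2-k_0^2}$. Writing $\beta_n=i\alpha_n\sqrt{1-k_0^2/\alpha_n^2}$ gives $\beta_n\sim i\alpha_n$, and hence the two limits that drive everything,
$$\frac{\alpha_n}{\beta_n}\longrightarrow -i,\qquad \frac{1}{\beta_n}\longrightarrow 0 .$$

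Next I would insert these into $r_n=\frac{K}{\pi\beta_n}\left(P_z+iM\cdot\kappa^+_n\right)$. Expanding the dot product as $M\cdot\kappa^+_n=M_x\alpha_n+M_y\beta_n$ and distributing the prefactor yields
$$r_n=\frac{K}{\pi}\left(\frac{P_z}{\beta_n}+iM_x\frac{\alpha_n}{\beta_n}+iM_y\right).$$
Passing to the limit, the first term vanishes, the second tends to $iM_x\cdot(-i)=M_x$, and the third is constant, so $r_n\sim\frac{K}{\pi}(M_x+iM_y)=\frac{K}{\pi}\bold{m}$. The computation for $t_n$ is identical except that $\kappa^-_n$ replaces $\kappa^+_n$, flipping the sign of the magnetic contribution along $y$ (and $\delta_{n0}=0$ for $n$ large), which produces $t_n\sim\frac{K}{\pi}(M_x-iM_y)=\frac{K}{\pi}\bold{m}^*$.

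Finally I would re-express the magnetic-moment combinations through the scattering coefficients. From the definition $M=\left((b^0_1+b^0_{-1}),\,i(b^0_1-b^0_{-1})\right)$ one finds $\bold{m}=M_x+iM_y=2b^0_{-1}$ and $\bold{m}^*=M_x-iM_y=2b^0_1$, which turns the two limits into the announced closed forms $\frac{2Kb^0_{-1}}{\pi}$ and $\frac{2Kb^0_{1}}{\pi}$.

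The calculation carries no serious analytic difficulty; the only delicate point is the branch choice for $\beta_n$. The whole statement rests on $\alpha_n/\beta_n\to-i$ rather than $+i$, and this sign is dictated precisely by the evanescence convention $i\beta_n<0$. I expect that to be the single step worth spelling out, since the opposite branch would interchange the roles of $b^0_1$ and $b^0_{-1}$ in the two limits.
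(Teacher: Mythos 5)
Your proposal is correct and is essentially the computation the paper intends: the corollary is stated without an explicit proof, and the direct substitution of $\kappa^\pm_n$ into the formulas for $r_n,t_n$ together with $\beta_n\sim i\alpha_n$ (forced by the convention $i\beta_n<0$) is exactly the argument. Your identification of the branch choice for $\beta_n$ as the one delicate point is apt, since it is what distinguishes the $b^0_{-1}$ and $b^0_{1}$ limits.
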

We are not {\it a priori} in the homogenization regime where $k_0 d\ll 1$ and hence there can be several reflected and transmitted orders.
In expression (\ref{series}), the propagative waves correspond to the $\beta_n$'s that are real associated with the finite set
$U=\left\{n\in \mathbb{Z}, \beta_n \in \mathbb{R}^+\right\}$  corresponding to the diffractive orders of the grating and the evanescent waves to the infinite set
$U^+=\left\{n\in \mathbb{Z}, i\beta_n \in \mathbb{R}^-\right\}$.

\section{impedance operator formulation}
Our point is now to replace the set of nano-resonators by a meta-surface $\S$ which is simply the line $y=0$. This requires to specify the boundary conditions there in terms of an impedance operator.

To so, consider the continuation of the field $U$ obtained by making $a=0$ in (\ref{series}). The continued field is still denoted $U$. It is a singular distribution. To handle this situation, let us introduce the following fields of Sobolev  spaces:
\bq H^{s/2}_{\alpha}(Y)=\left\{ u=\sum_n u_n e^{i\alpha_n x}; \sum_n (1+|k_n|^2)^{s/2} |u_n|^2 \leq +\infty\right\} \eq
 and the dual spaces:
\bq H^{-s/2}_{\alpha}(Y)=\left\{ u=\sum_n u_n e^{i\alpha_n x}; \sum_n (1+|k_n|^2)^{-s/2} |u_n|^2 \leq +\infty\right\}. \eq
Let $\Z$ be the pseudo-differential operator defined by: $\Z[u]=v$ where
\begin{equation}
u(x)=\sum_n u_n e^{i\alpha_n x},\, v(x)= \sum_n i\beta_n u_n e^{i\alpha_n x}
\end{equation}
It is straightforward to show the following:
\begin{Proposition}
$\Z$ is continuous and invertible from $H_{\alpha}^{-s/2}(Y)$ to $H_{\alpha}^{-s/2-1}(Y)$, for $s > 1$.
\end{Proposition}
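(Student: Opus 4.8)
The plan is to view $\Z$ as a Fourier multiplier acting diagonally on the modes $e^{i\alpha_n x}$ with symbol $m_n=i\beta_n$, and to obtain both continuity and invertibility from two--sided bounds comparing $|m_n|$ with the Sobolev weight $(1+|k_n|^2)^{1/2}$. First I would record the symbol's growth. For $|n|$ large one has $\alpha_n^2>k_0^2$, so by the sign convention fixed earlier ($i\beta_n<0$ in that range) the symbol is real and negative with $|m_n|=\sqrt{\alpha_n^2-k_0^2}$; since $|\alpha_n|=|\alpha+nK|\to\infty$, this gives $|m_n|\sim|k_n|$ as $|n|\to\infty$. Thus $m_n$ has order exactly one, which is precisely the loss of one derivative encoded in the shift from $H^{-s/2}_\alpha(Y)$ to $H^{-s/2-1}_\alpha(Y)$.

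Next I would establish the upper bound $|m_n|^2\le C(1+|k_n|^2)$ uniformly in $n$: on the finite propagative set $\alpha_n^2\le k_0^2$ one has $|m_n|=\beta_n\le k_0$, while on its complement $|m_n|^2=\alpha_n^2-k_0^2\le 1+|k_n|^2$. Continuity then follows at once, since for $v=\Z[u]$,
\bq
\sum_n (1+|k_n|^2)^{-s/2-1}\,|v_n|^2=\sum_n (1+|k_n|^2)^{-s/2-1}|m_n|^2\,|u_n|^2\le C\sum_n (1+|k_n|^2)^{-s/2}\,|u_n|^2 .
\eq

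For invertibility the candidate inverse is the diagonal multiplier $v\mapsto u$ with $u_n=v_n/m_n$, which is a genuine two--sided inverse whenever every $m_n\neq 0$. Boundedness of $\Z^{-1}$ from $H^{-s/2-1}_\alpha(Y)$ to $H^{-s/2}_\alpha(Y)$ requires the matching lower bound $|m_n|^2\ge c(1+|k_n|^2)$; this gives $|m_n|^{-2}\le c^{-1}(1+|k_n|^2)^{-1}$, and the same computation as above with the inequality reversed yields the estimate. The asymptotic part is automatic because $\alpha_n^2-k_0^2\sim|k_n|^2$, so the only issue is the finitely many small indices.

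I expect this finite--index lower bound to be the sole genuine obstacle, namely ruling out $m_n=i\beta_n=0$. This degeneracy occurs exactly when $\alpha_n=\pm k_0$, the Wood--Rayleigh anomalies, where the $n$-th mode lies in the kernel of $\Z$ and invertibility fails; it coincides with the discrete set of exceptional wavenumbers already recorded in Proposition 1. Away from that set the finitely many nonzero values $\beta_n$ are bounded below by a positive constant, which together with the asymptotics furnishes the uniform lower bound and closes the argument. The multiplier estimates are in fact uniform in $s$; the hypothesis $s>1$ only fixes the functional setting (with $-s/2<-1/2$) in which the singular continued field of (\ref{series}) is a well--defined element.
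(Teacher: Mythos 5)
Your proof is correct; the paper itself offers no argument here (it simply declares the result ``straightforward''), and your diagonal Fourier-multiplier treatment with the two-sided symbol bounds $c(1+|k_n|^2)\le|i\beta_n|^2\le C(1+|k_n|^2)$ is exactly the natural route the authors must have in mind. Your observation that invertibility implicitly requires excluding the Wood--Rayleigh anomalies $\beta_n=0$ (a discrete set of wavenumbers, left unstated in the Proposition) is a genuine and worthwhile precision rather than a defect of your argument.
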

The inverse of $\Z$ is the admittance operator $\Y$ defined by: $v=\Y \left[u\right]$.\\

Let us denote $F$ an element of $H^{-s/2}_{\alpha}(Y) \times H^{-s/2-1}_{\alpha}(Y)$,  representing the discontinuity of the field and its derivative through $\S$. The traces of the field and its derivative are: $F^+=\left( \begin{array}{c} U(x,0^+)  \\ \frac{\partial U}{\partial y}(x,0^+)\end{array}\right)$ and $F^-=-\left( \begin{array}{c} U(x,0^-)  \\ \frac{\partial U}{\partial y}(x,0^-)\end{array}\right)$. By definition, it holds $F^++F^-=F$.
The Calderòn projectors \cite{cessenat} are defined by $F^+=\P^+ F,\, F^-=\P^- F$. The preceding shows that

\begin{Proposition}
\bq
\P^+=\frac{1}{2}\left(
\bay {cc}
1 &  \Y \\
\Z & 1
\eay
\right)
,\,  \P^-=\frac{1}{2}\left(
\bay {cc}
1 & -\Y \\
-\Z & 1
\eay
\right)
\eq
\end{Proposition}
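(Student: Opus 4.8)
The plan is to read the one-sided Cauchy data straight off the modal expansion and reduce the whole statement to inverting an elementary $2\times2$ jump system, the only substantial ingredient being the invertibility $\Y=\Z^{-1}$ from the preceding Proposition. First I would record the structure of the field entering the Calder\'on representation: it is the outgoing continuation of (\ref{series}) at $a=0$, which above $\S$ is a superposition of the upgoing modes $e^{i(\alpha_n x+\beta_n y)}$ and below $\S$ of the downgoing modes $e^{i(\alpha_n x-\beta_n y)}$ (the incident plane wave has no jump across $\S$ and therefore drops out of $F$). Writing the Dirichlet traces $g^+=U(\cdot,0^+)$ and $g^-=U(\cdot,0^-)$, term-by-term differentiation in $y$ together with the very definition of $\Z$ gives the two Dirichlet-to-Neumann identities
\[
\partial_y U(\cdot,0^+)=\Z g^+,\qquad \partial_y U(\cdot,0^-)=-\Z g^-,
\]
the opposite signs coming from $\partial_y e^{\pm i\beta_n y}\big|_{y=0}=\pm i\beta_n$.

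With these identities the one-sided data are $F^+=(g^+,\ \Z g^+)^{T}$ and $F^-=(-g^-,\ \Z g^-)^{T}$, so that, writing $F=(F_1,F_2)^{T}$,
\[
F=F^++F^-=\big(g^+-g^-,\ \Z(g^++g^-)\big)^{T}.
\]
I would then read this as a linear system for the traces and invert it: the first component gives $g^+-g^-=F_1$, while applying $\Y$ to the second and using $\Y\Z=I$ (the invertibility Proposition) gives $g^++g^-=\Y F_2$. Hence $g^+=\tfrac12(F_1+\Y F_2)$ and $g^-=\tfrac12(\Y F_2-F_1)$. This reconstruction step is exactly where the admittance operator $\Y=\Z^{-1}$ is needed, and it is what pins the argument down for \emph{arbitrary} jump data $F$ rather than only the three-parameter family generated by $(P_z,M)$.

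Finally I would substitute these expressions back into $F^+=(g^+,\Z g^+)^{T}$ and $F^-=(-g^-,\Z g^-)^{T}$, simplifying $\Z g^\pm$ by $\Z\Y=I$ (so $\Z g^+=\tfrac12(\Z F_1+F_2)$ and $\Z g^-=\tfrac12(F_2-\Z F_1)$), to obtain
\[
F^+=\tfrac12\big(F_1+\Y F_2,\ \Z F_1+F_2\big)^{T},\qquad F^-=\tfrac12\big(F_1-\Y F_2,\ -\Z F_1+F_2\big)^{T}.
\]
Since this holds for every admissible $F$, comparison with $F^+=\P^+F$ and $F^-=\P^-F$ identifies $\P^\pm$ with the announced block matrices. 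The computation is short and essentially bookkeeping; the main obstacle is not analytic but conceptual, namely getting the two Dirichlet-to-Neumann signs right (upgoing above versus downgoing below) and recognizing that only the outgoing field is relevant, the jump-free incident wave being discarded so that the $\pm\Z$ relations hold exactly.
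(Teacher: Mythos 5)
Your proposal is correct and fills in exactly the argument the paper leaves implicit behind ``The preceding shows that'': read the Dirichlet-to-Neumann relations $\partial_y U(\cdot,0^\pm)=\pm\Z\,U(\cdot,0^\pm)$ off the upgoing/downgoing modal expansions, express $F^\pm$ through the traces, invert the resulting $2\times 2$ system using $\Y=\Z^{-1}$, and identify the blocks. Your explicit caveat that the jump-free incident wave must be discarded for the one-sided $\Z$-relations to hold (so that the projector identities apply to the outgoing field) is the right reading of the paper's convention, and your reconstruction of $g^\pm$ from arbitrary $F$ correctly justifies the statement at the level of operators rather than only for the particular diffracted field.
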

Obviously, it holds: $\P^++\P^-=\mathbb{I}$, $(\P^+)^2=\P^+,\, (\P-)^2=\P^-$ and $\P^+ \P^-=\P^- \P^+=0$, as it should.

The transmission conditions on $\S$ can be written:
\ba
[U(x,0^+)-U(x,0^-)]=\frac{2iK M_y}{\pi}\sum_n e^{i \alpha_n x}=\frac{2iK M_y}{\pi}\sum_n \frac{1}{i\beta_n r_n}i\beta_n r_ne^{i \alpha_n x}\\
\left[\frac{\partial U}{\partial y}(x,0^+)-\frac{\partial U}{\partial y}(x,0^-)\right]=\sum_n i\beta_n (\delta_{n0}+r_n+t_n)e^{i \alpha_n x}=\sum_n i\beta_n \frac{(\delta_{n0}+r_n+t_n)}{\delta_{n0}+r_n}r_ne^{i \alpha_n x}\,,
\ea
This suggests to define the following pseudo-differential operators, acting on $u=\sum_n u_n e^{i \alpha_n x}$
\ba
\X[u](x)=\sum_n  \left(1+\frac{t_n}{\delta_{n0}+r_n}\right) U_n e^{i \alpha_n x} \\
\W[u](x)=\frac{2iK M_y}{\pi}\sum_n  \frac{1}{r_n} U_n e^{i \alpha_n x}
\ea
Both $r_n$ and $t_n$ are bounded with respect to $n$ (see eq. (\ref{boundedRT})), hence the following:
\begin{Proposition}
The pseudo-differential operators $\X$ and $\W$ are isomorphism of $L_{\alpha}^2(\mathbb{T})$.
\end{Proposition}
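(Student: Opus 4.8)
The plan is to exploit the fact that, in the orthogonal basis $\{e^{i\alpha_n x}\}_{n\in\mathbb{Z}}$ of $L^2_{\alpha}(\mathbb{T})$, both $\X$ and $\W$ are diagonal (Fourier multiplier) operators. Writing $u=\sum_n u_n e^{i\alpha_n x}$, the map $u\mapsto (u_n)_{n\in\mathbb{Z}}$ is, up to the normalising constant $\sqrt{d}$, an isometric isomorphism of $L^2_{\alpha}(\mathbb{T})$ onto $\ell^2(\mathbb{Z})$ (orthogonality follows from $\alpha_n-\alpha_m=(n-m)K$ with $K=2\pi/d$). Under this identification $\X$ and $\W$ become multiplication by the symbols
\[
\sigma^{\X}_n=1+\frac{t_n}{\delta_{n0}+r_n},\qquad \sigma^{\W}_n=\frac{2iK M_y}{\pi}\,\frac{1}{r_n}.
\]
A diagonal operator on $\ell^2(\mathbb{Z})$ is a topological isomorphism if and only if its symbol is bounded and bounded away from zero, that is $0<\inf_n|\sigma_n|\le\sup_n|\sigma_n|<+\infty$; its inverse is then the multiplier with symbol $1/\sigma_n$. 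The whole proof thus reduces to checking these two conditions for each symbol.

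First I would establish the upper bound. By the asymptotics (\ref{boundedRT}) the sequences $r_n$ and $t_n$ converge as $n\to\pm\infty$ to the finite limits $\tfrac{K}{\pi}\mathbf{m}$ and $\tfrac{K}{\pi}\mathbf{m}^*$, so in particular they are bounded; moreover the denominators $\delta_{n0}+r_n$ and $r_n$ stay away from zero provided $r_n\neq 0$ for every $n$, which holds off the discrete exceptional set of Proposition 1. Hence $\sup_n|\sigma^{\X}_n|<\infty$ and $\sup_n|\sigma^{\W}_n|<\infty$, which gives continuity of both operators on $L^2_{\alpha}(\mathbb{T})$.

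Next comes the lower bound, which is the crux. Passing to the limit in the symbols with the help of (\ref{boundedRT}) yields
\[
\sigma^{\X}_n\xrightarrow[n\to\pm\infty]{}1+\frac{\mathbf{m}^*}{\mathbf{m}}=\frac{2M_x}{\mathbf{m}},\qquad
\sigma^{\W}_n\xrightarrow[n\to\pm\infty]{}\frac{2iM_y}{\mathbf{m}}.
\]
A sequence that converges to a nonzero limit and none of whose (finitely many) remaining terms vanishes is automatically bounded away from zero, so $\inf_n|\sigma_n|>0$ for both symbols. Combined with the upper bound, this produces a bounded multiplier inverse and proves that $\X$ and $\W$ are isomorphisms of $L^2_{\alpha}(\mathbb{T})$.

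The main obstacle is exactly the nonvanishing of these limiting symbols: the lower bound requires $M_x\neq 0$ for $\X$ and $M_y\neq 0$ for $\W$ (together with $r_n\neq 0$ for all $n$), which are the genuine nondegeneracy conditions on the induced magnetic moment. Should $M_x=0$ (respectively $M_y=0$), the limiting symbol would vanish and the corresponding operator would fail to be bounded below, so these hypotheses cannot be dropped; they are to be read as holding away from the discrete set of exceptional wavenumbers already excluded in Proposition 1.
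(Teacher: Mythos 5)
Your proof follows the same route as the paper: both treat $\X$ and $\W$ as Fourier multipliers on $L^2_{\alpha}(\mathbb{T})$ and deduce the isomorphism property from the boundedness and the nonzero limits of $r_n$ and $t_n$ given in the Corollary (eq.~(\ref{boundedRT})). You are in fact more careful than the paper, whose one-line justification only invokes boundedness of $r_n,t_n$ (which yields continuity but not invertibility); you correctly supply the missing lower bound on the symbols and make explicit the nondegeneracy conditions $M_x\neq 0$, $M_y\neq 0$ and $r_n\neq 0$ for all $n$ that the statement implicitly requires.
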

These conditions can be rewritten conveniently in the operator form:
\begin{Theorem}

The traces of the field $U(x,y)$ diffracted by the meta-surface under the illumination of a plane wave $U^i$ satisfy the impedance conditions:
\begin{equation}
\T
F^+=F^-  \,,
\end{equation}
where 
$$\T= \left(
\begin{array}{cc}
1 & -\Y \W \\
-\Z \X & 1
\end{array}
\right) 
$$
\end{Theorem}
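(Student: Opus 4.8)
The plan is to reduce the operator identity to the two scalar transmission conditions stated just above, and then to read off the matrix form from the Fourier expansions of the four traces. First I would compute, directly from the series (\ref{series}) with $a=0$, the coefficients (with respect to the basis $e^{i\alpha_n x}$) of the four quantities entering $F^+$ and $F^-$: one finds that $U(x,0^+)$ has coefficients $\delta_{n0}+r_n$, that $U(x,0^-)$ has coefficients $t_n$, and that $\partial_y U(x,0^\pm)$ have coefficients $\pm i\beta_n$ times the corresponding reflected/transmitted amplitudes, the incident wave contributing only at $n=0$. Forming the jumps across $\S$ then reproduces exactly the two displayed relations for the discontinuities of $U$ and $\partial_y U$.

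The heart of the argument is that $\Z\X$ and $\Y\W$ have been manufactured precisely so as to turn a single upper trace into the relevant jump. Concretely, $\W$ rescales the $n$-th coefficient by $\frac{2iKM_y}{\pi r_n}$ and $\Y$ by $\frac{1}{i\beta_n}$, so that $\Y\W$ sends the coefficient $i\beta_n r_n$ of $\partial_y U(x,0^+)$ to the constant $\frac{2iKM_y}{\pi}$, i.e. to the coefficient of the jump of $U$; hence $\Y\W[\partial_y U(\cdot,0^+)]=U(\cdot,0^+)-U(\cdot,0^-)$. Symmetrically, $\X$ multiplies the $n$-th coefficient by $\frac{\delta_{n0}+r_n+t_n}{\delta_{n0}+r_n}$ and $\Z$ by $i\beta_n$, so that $\Z\X$ sends the coefficient $\delta_{n0}+r_n$ of $U(\cdot,0^+)$ to $i\beta_n(\delta_{n0}+r_n+t_n)$, the coefficient of the jump of $\partial_y U$; hence $\Z\X[U(\cdot,0^+)]=\partial_y U(\cdot,0^+)-\partial_y U(\cdot,0^-)$.

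With these two identities in hand the statement is pure bookkeeping: substituting them into the two rows of $\T F^+$ and using $U(\cdot,0^+)-(U(\cdot,0^+)-U(\cdot,0^-))=U(\cdot,0^-)$, together with the analogous relation for the normal derivative, collapses $\T F^+$ onto the lower traces and recovers $F^-$. Before writing this out I would record the standing hypotheses that make every step legitimate: the manipulations are term-by-term on Fourier series, so I must stay away from the exceptional discrete set of the first Proposition and from the Rayleigh frequencies where some $\beta_n$ vanishes, and the divisions by $r_n$ and by $\delta_{n0}+r_n$ force these amplitudes to be nonzero. Continuity of the composite maps between $H^{-s/2}_\alpha(Y)$, $H^{-s/2-1}_\alpha(Y)$ and $L^2_\alpha(\mathbb{T})$ is then furnished by the preceding propositions on $\Z,\Y$ and on the isomorphism property of $\X,\W$, while the asymptotics of the Corollary (eq. (\ref{boundedRT})) guarantee that all the rescaled series still define elements of the stated spaces.

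The step I expect to be genuinely delicate is the single mode $n=0$, together with the tracking of the overall sign in the definition of $F^-$. All the operators act diagonally and are transparent for $n\neq0$, where $\delta_{n0}=0$ and the reflected/transmitted structure is symmetric; it is only at $n=0$ that the incident plane wave enters the traces (through $\delta_{n0}$ and the sign of $\partial_y U^i$) and that the factors $\frac{1}{r_n}$ and $\frac{\delta_{n0}+r_n+t_n}{\delta_{n0}+r_n}$ must be normalized correctly. Matching this isolated term is exactly what pins down the placement of the $\delta_{n0}$'s in the two jump relations, and hence the precise off-diagonal entries $-\Y\W$ and $-\Z\X$ of $\T$ and the global sign relating $\T F^+$ to $F^-$.
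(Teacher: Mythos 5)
Your strategy is the right one and coincides with what the paper tacitly does: the theorem is presented as an immediate rewriting of the two displayed transmission conditions, so the entire content of a proof is the pair of identities $\Y\W[\,\cdot\,]=U(\cdot,0^+)-U(\cdot,0^-)$ and $\Z\X[\,\cdot\,]=\partial_y U(\cdot,0^+)-\partial_y U(\cdot,0^-)$, checked coefficient by coefficient, followed by the bookkeeping you describe. The problem is that you explicitly defer the one step that does not go through as written, namely the $n=0$ mode. Carrying out the check: $\partial_y U(x,0^+)=\sum_n i\beta_n(r_n-\delta_{n0})e^{i\alpha_n x}$, since the incident wave contributes $-i\beta_0$ in the zeroth mode. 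Hence $\Y\W[\partial_y U(\cdot,0^+)]$ has $n=0$ coefficient $\frac{2iKM_y}{\pi}\cdot\frac{r_0-1}{r_0}$, whereas the jump of $U$ has constant coefficient $\frac{2iKM_y}{\pi}$; your first identity therefore fails in the zeroth mode. It holds only if the operator is applied to the derivative trace of the \emph{scattered} field $\sum_n i\beta_n r_n e^{i\alpha_n x}$ --- which is exactly the quantity the paper exhibits when it rewrites the first transmission condition --- or if $\W$ is renormalized at $n=0$. The same mode is troublesome in the other row: a direct computation gives the jump of the normal derivative as $\sum_n i\beta_n(-\delta_{n0}+r_n+t_n)e^{i\alpha_n x}$, not $\sum_n i\beta_n(\delta_{n0}+r_n+t_n)e^{i\alpha_n x}$, so $\Z\X[U(\cdot,0^+)]$ does not reproduce it at $n=0$ either.

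There is also the sign you flagged but did not resolve. Your collapse $U(\cdot,0^+)-\bigl(U(\cdot,0^+)-U(\cdot,0^-)\bigr)=U(\cdot,0^-)$ is correct, but it lands on $+\bigl(U(x,0^-),\partial_y U(x,0^-)\bigr)^T$, which under the paper's convention $F^-=-\bigl(U(x,0^-),\partial_y U(x,0^-)\bigr)^T$ equals $-F^-$, not $F^-$. So the bookkeeping as you set it up cannot recover the statement verbatim: one must either drop the minus in the definition of $F^-$ (reading $\T$ as a genuine transfer matrix sending upper traces to lower traces) or flip the signs of the off-diagonal entries of $\T$. You correctly singled out the $n=0$ mode and the sign of $F^-$ as the delicate points, but a proof has to resolve them, and resolving them forces a modification of either the operators in the zeroth mode or the statement itself; asserting that the normalization ``pins down'' the result is not a substitute for the computation.
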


The operator $\T$ is the transfer matrix of the meta surface.  The discontinuity of the (effective) field $U$ at $y=0$ is due to the existence of a magnetic dipole moment. In the homogenization limit of very large wavelengths, i.e. larger than the wavelengths corresponding to magnetic resonances and larger than twice the period, there are only one transmitted and one reflected wave, the evanescent waves can be discarded and the magnetic resonances have no effect, we then have:
\begin{Proposition}
For a wavelength $\lambda \geq 2d$ and  larger than the largest magnetic resonance, the propagative part of the field is given by:
\ba
y>0:\, U(x,y)=e^{i( \alpha_0 x-\beta_0 y)}+r_0 e^{i( \alpha_0 x+\beta_0 y)} \\
y<0:\, U(x,y)=t_0 e^{i( \alpha_0 x-\beta_0 y)}
\ea
where $r_0=\frac{2b_0^0}{d\beta_0}$ and $t_0=1+r_0$.
The transfer matrix becomes:
\bq \label{elecdip}
{\cal T}=\left(
\begin{array}{cc}
1 & 0 \\ -2i\beta_0 \frac{r}{1+r} & 1
\end{array}
\right).
\eq

\end{Proposition}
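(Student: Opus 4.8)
The plan is to derive the statement as a double specialization of the Theorem and of the general transfer matrix $\T$: first I restrict the diffracted spectrum to its single propagative order using $\lambda\geq2d$, and then I switch off the magnetic response using the hypothesis on the resonance wavelength. For the spectral restriction, note that $\lambda=2\pi/k_0$, so $\lambda\geq2d$ is equivalent to $k_0\leq\pi/d=K/2$. Since $\alpha=k_0\sin\theta$ lies in $Y^*$ with $|\alpha|\leq k_0$, every order $n\neq0$ satisfies $|\alpha_n|=|\alpha+nK|\geq K-|\alpha|\geq K/2\geq k_0$, hence $\alpha_n^2\geq k_0^2$ and $\beta_n=\sqrt{k_0^2-\alpha_n^2}$ is purely imaginary with $i\beta_n<0$. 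Every term with $n\neq0$ in the series (\ref{series}) is therefore evanescent and decays as $|y|\to\infty$, so only $n=0$ contributes to the propagative part, isolating the two plane waves displayed in the Proposition.

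Next I impose the resonance hypothesis. For $\lambda$ larger than the largest magnetic resonance the magnetic polarizability is off resonance, so the magnetic dipole moment is negligible, $M=(M_x,M_y)=0$, equivalently $b^0_{1}=b^0_{-1}=0$, leaving only the electric moment $P_z=b^0_0$. Substituting $M=0$ into the coefficients of the Theorem gives $r_n=\frac{K}{\pi\beta_n}P_z$ and $t_n=\delta_{n0}+\frac{K}{\pi\beta_n}P_z$. At $n=0$, using $K/\pi=2/d$, this yields $r_0=\frac{2b^0_0}{d\beta_0}$ and $t_0=1+r_0$, which are exactly the amplitudes claimed and complete the field formulas.

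Finally I read off the transfer matrix from $\T F^+=F^-$ in this single-mode, $M_y=0$ situation. Because $M_y=0$ the field jump $U(x,0^+)-U(x,0^-)$ vanishes, so $\W\equiv0$ and the upper-right block $-\Y\W$ is zero. For the lower-left block I compute the derivative jump directly from the two surviving plane waves: differentiating $U$ and evaluating at $y=0^\pm$ gives $\partial_yU(x,0^+)-\partial_yU(x,0^-)=i\beta_0(r_0+t_0-1)e^{i\alpha_0x}=2i\beta_0 r_0\,e^{i\alpha_0x}$, the last equality using $t_0=1+r_0$. Writing this as $\Z\X$ applied to $U(x,0^+)$, whose sole Fourier coefficient is $1+r_0$, forces $\X$ to reduce on the propagative mode to multiplication by $\frac{2r_0}{1+r_0}$ and $\Z$ to multiplication by $i\beta_0$, whence $-\Z\X=-2i\beta_0\frac{r}{1+r}$ with $r=r_0$. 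Assembling the two blocks produces the stated matrix (\ref{elecdip}).

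The main obstacle is bookkeeping rather than anything deep: one must track the sign of the incident wave's normal derivative, namely $-i\beta_0$ at $y=0^+$, since it is precisely this $-\delta_{n0}$ contribution that converts the naive factor into $r_0/(1+r_0)$ and yields the nontrivial lower-left entry. One must also justify that the single-mode restrictions of the pseudo-differential operators $\Z,\X,\W$ are the scalar multipliers above, and make precise, via the off-resonance form of the magnetic polarizability, the physical step that renders $M$ negligible.
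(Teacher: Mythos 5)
The paper states this Proposition without any proof (it follows the general Theorem and the transfer--matrix Proposition with only the remark that the result matches \cite{bloch}), so there is nothing explicit to compare against; your derivation is the natural one and, as far as I can check, it is correct. The reduction $\lambda\geq 2d \Leftrightarrow k_0\leq K/2$, hence $|\alpha_n|\geq K-|\alpha|\geq k_0$ for $n\neq0$, correctly isolates the single propagative order (up to the grazing-incidence Wood anomaly $|\alpha|=k_0$, $\lambda=2d$, which the paper also ignores), and setting $M=0$ in the Theorem's coefficients immediately gives $r_0=\frac{K}{\pi\beta_0}P_z=\frac{2b_0^0}{d\beta_0}$ and $t_0=1+r_0$. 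The one step you should flag as physical rather than mathematical is ``off resonance $\Rightarrow M=0$'': this is an assumption about the scatterers' polarizabilities, not something derivable from the formalism, but the paper makes exactly the same assumption in the sentence preceding the Proposition.

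Your sign bookkeeping on the derivative jump is in fact the most valuable part of the argument, and it is worth stating explicitly that it \emph{corrects} the paper rather than merely reproducing it. The incident wave contributes $-i\beta_0$ to $\partial_y U(x,0^+)$, so the jump is $\sum_n i\beta_n(r_n+t_n-\delta_{n0})e^{i\alpha_n x}=\sum_n 2i\beta_n r_n e^{i\alpha_n x}$ (when $M=0$), giving the multiplier $2i\beta_0 r_0/(1+r_0)$ on $U(x,0^+)=\sum_n(\delta_{n0}+r_n)e^{i\alpha_n x}$ and hence the stated lower-left entry $-2i\beta_0\frac{r}{1+r}$. The paper's displayed jump formula carries $+\delta_{n0}$, and its operator $\X$ has $n=0$ multiplier $1+\frac{t_0}{1+r_0}=2$; taken literally these would yield $-2i\beta_0$ instead, contradicting the Proposition. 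Your version is the internally consistent one.
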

The form found for the transfer matrix matches that obtained in another context in \cite{bloch}.

\section{Conclusion}
The scattering of a plane wave by a grating of nano-rods was described in the framework of meta-surfaces by exhibiting an impedance condition. This takes into account both the electric and the magnetic dipoles characterizing each nano-rod. This study can be generalized to higher multipoles but also to non-periodic, for instance quasi-periodic, structures \cite{quasi}, to elementary scatterers deposited on an arbitrary smooth surface.
A similar approach was used in \cite{strong} to study the coupling of a quantum emitter with the modes supported by the meta-surface, see also \cite{caze}. The proposed formalism can be used in this context to analyze the role of resonances.
\vskip 1cm

\end{document}